\newtheorem{Prob}{Problem}
\title{High-Performance Transmission Mechanism Design of Multi-Stream Carrier Aggregation for 5G Non-Standalone Network}
\address[1]{School of Communication \& Information Engineering, Shanghai University, Shanghai, 200444, China}
\address[2]{Shanghai Institute for Advanced Communication and Data Science, Shanghai University, Shanghai, 200444, China}
\address[*]{The corresponding author: Shunqing Zhang}
\author{Jun Yu\inst{1, 2}, Shunqing Zhang\inst{1, 2}, Jiayun Sun\inst{1, 2}, Shugong Xu\inst{1, 2}, Shan Cao\inst{1, 2}\\
Email:\{junyu, shunqing, jiayunsun, shugong, cshan\}@shu.edu.cn}
\begin{document}

\maketitle

\begin{abstract}
Multi-stream carrier aggregation is a key technology to expand bandwidth and improve the throughput of the fifth-generation wireless communication systems. However, due to the diversified propagation properties of different frequency bands, the traffic migration task is much more challenging, especially in hybrid sub-6 GHz and millimeter wave bands scenario. Existing schemes either neglected to consider the transmission rate difference between multi-stream carrier, or only consider simple low mobility scenario. In this paper, we propose a low-complexity traffic splitting algorithm based on fuzzy proportional integral derivative control mechanism. The proposed algorithm only relies on the local radio link control buffer information of sub-6 GHz and mmWave bands, while frequent feedback from user equipment (UE) side is minimized. As shown in the numerical examples, the proposed traffic splitting mechanism can achieve more than 90\% link resource utilization ratio for different UE transmission requirements with different mobilities, which corresponds to 10\% improvement if compared with conventional baselines.
\keywords{
5G, millimeter wave, multi-stream carrier aggregation, traffic splitting }
\end{abstract}

\section{Introduction} \label{sect:intro}
With the commercial deployment of the fifth-generation wireless communication systems (5G), emerging applications including augmented reality/virtual reality (AR/VR) and high density video streaming have triggered the explosive growth of wireless traffic \cite{1}. Although massive multiple-input multiple-output (MIMO), orthogonal frequency-division multiplexing (OFDM), and advanced link adaptation techniques are quite effective for improving the achievable spectral efficiency \cite{2}, the most powerful tool to improve the overall throughput for each base station (BS) is via carrier aggregation (CA) \cite{3}, i.e., by aggregating multiple available transmission bands together to achieve extremely high data rates. 

In general, the CA technology can be divided into three classes, namely {\em intra-band contiguous CA}, {\em intra-band non-contiguous CA}, and {\em inter-band non-contiguous CA} \cite{DCA}, with one primary component carrier (PCC) and several contiguous or non-contiguous secondary component carriers (SCCs). For intra-band contiguous contiguous or non-contiguous CA, PCC and SCCs share the similar coverage, and the traffic migration among different component carriers is straight forward \cite{5}, while for inter-band non-contiguous CA, the traffic migration task is much more challenging due to the diversified propagation properties of different frequency bands. In the conventional fourth generation wireless systems (4G), the operating frequency bands are below 6 GHz, and the number of supported component carriers are limited to 5, which corresponds to a total bandwidth of 100 MHz. To support enhanced mobile broadband (eMBB) and ultra-reliable low-latency communications (URLLC) of 5G networks, the millimeter wave (mmWave) technology \cite{6,7} has been proposed to deal with the extremely crowded frequency bands below 6 GHz. With more frequency bands available, the number of supported component carriers for 5G CA is increased to 16 with 1 GHz total bandwidth \cite{(6)}.

Since mmWave bands suffer from the high isotropic propagation losses, the corresponding link adaptation schemes, including modulation and coding scheme (MCS) and re-transmission processes, as well as the channel outage events are quite different from sub-6 GHz bands \cite{8}. This leads to several novel designs in mmWave bands, especially when interacted with the above 5G CA technology. For example, a digital pre-distortion technique and a filter bank multicarrier (FBMC) technique have been proposed in \cite{9} and \cite{FBMC,IoT}, respectively, to deal with the linearization issues for extremely wide band power amplifiers. In the physical layer, a novel beamwidth selection and optimization scheme has been proposed in \cite{D2D} to deal with the potential interference among mmWave links and sub-6 GHz links, and the extensions to non-orthogonal multiple access (NOMA) based and multiple input multiple output (MIMO) based transmission strategies have been discussed in \cite{2021carrier} and \cite{CAReceive} as well. In the media access control (MAC) layer, a novel  mechanism to dynamically select sub-6 GHz and mmWave bands has then been proposed in \cite{10}, and later extended to incorporate resource block (RB) level allocation in \cite{CALet}, which guarantees quality-of-experience (QoE) performance for CA enabled users. In the higher layer, collaborative transmission for video streaming applications has been proposed in \cite{deng2020dash}, which allows sub-6 GHz bands for control information and mmWave bands for data transmission.

A more efficient approach to utilize the inter-band non-contiguous CA is to boost transmission rates of mmWave bands with the assistance of sub-6 GHz bands \cite{NCA}. Typical examples include content distribution and processing in vehicular networks \cite{11} and exploiting channel state information (CSI) of a sub-6 GHz channel to choose a mmWave beam \cite{12}. With the recent progresses in achieving high data rates in traffic hotspots, simultaneous transmission via sub-6 GHz and mmWave bands has been proposed in \cite{13} and \cite{14}, which generally requires smart traffic splitting mechanisms in the packet data convergence protocol (PDCP) layer. The above schemes usually consider low mobility users and a simple time-invariant strategy is sufficient to achieve promising results as demonstrated in \cite{15}. However, when user mobility increases, the time-invariant strategy often leads to mismatched traffic demands and transmission capabilities, and the link utilization ratio is limited. Meanwhile, conventional traffic splitting algorithms require the computational complexities to grow exponentially with the number of available bands, which may not be suitable for practical implementations of 5G networks with 16 component carriers as well.

In this paper, we consider a novel traffic splitting mechanism for multi-stream CA in hybrid sub-6 GHz and mmWave bands scenario. In order to address the above issues, we model different propagation losses for different frequency bands and propose a low-complexity traffic splitting algorithm based on fuzzy proportional integral derivative (PID) control mechanism, where the main contributions are listed below. 
\begin{itemize}
\item {\em Reduced Feedback Overhead.} Different from the conventional feedback based traffic splitting mechanisms, our proposed approach relies on observing the local RLC buffer statuses of component carriers to approximate UEs' behaviors, which is more favorable for the practical deployment.
\item {\em Low-complexity Implementation.} To reduce the implementation complexity, we approximate the original time-varying mixed-integer optimization problem with short term expectation maximization. Meanwhile, we utilize the fuzzy control-based PID adaptation instead of reinforcement learning scheme to achieve lower complexity and quicker convergence.
\end{itemize}

Since the proposed algorithm only relies on the local RLC buffer information of sub-6 GHz and mmWave bands and minimizes frequent feedback from user equipment (UE) side, it can also be easily extended to machine-to-machine \cite{RLSTM,xu2020deep}, or vehicular-to-vehicular communications \cite{RDDPG}. Through some numerical examples, our proposed traffic splitting mechanism can achieve more than 90\% link resource utilization ratio for different UE transmission requirements with different mobilities, which corresponds to 10\% throughput improvement, if compared with several conventional baseline schemes, such as \cite{15} and \cite{16}.

The remainder of the paper is organized as follows. In Section~\ref{sect:sys} we describe the high-layer splitting model for multi-stream carrier aggregation of 5G non-standalone network, in Section~\ref{sect:des} we discuss the design and deployment of allocation mechanism. In  Section~\ref{sect:exper} we report some examples and results, and we conclude the paper and provide insights on future works in Section~\ref{sect:con}. 
\begin{table*}[h]
\caption{Notation and Acronym}
\label{table:Notation}
\begin{tabular}{p{4cm}|p{12cm}}
 \hline 
\emph{Notation and Acronym}  & \emph{Definition} 
   \\  \hline  
$N_{SCC}$, $S$, $s^{th}$& Number of SCCs, set of SCCs and the $s^{th}$ SCCs, respectively  
\\ 
$Q^{P}(t)$ & Set of data packets buffered in the PDCP layer   
\\ 
$Q^{P}_{i}(t)$, $\vert Q^{P}_{i}(t)\vert$ &  Set and quantity of arriving packets from the SDAP layer   
\\ 
$Q^{P}_{P/s,o}(t)$, $\vert Q^{P}_{P/s,o}(t)\vert$              & Set and quantity of data packets departed to the PCC and the $s^{th}$ SCC
\\ 
$N^{P}_{\max}(t)$, $N^{P}_{\min}(t)$               &  Maximum and minimum packet indication in $Q^{P}(t)$, respectively
\\ 
$A_{P/s}(t)$             & Transmission strategy for the the PCC and the $s^{th}$ SCC, respectively
\\ 
$Q_{P/s}^R(t)$, $\vert Q_{P/s}^R(t)\vert $               & Set and quantity of data packets in the PCC and the $s^{th}$ SCC RLC buffer
\\ 
$ S_{P/s}^M(t)$, $\vert  S_{P/s}^M(t)\vert $               &  Set and quantity of data packets transmitted in MAC and PHY layer of the PCC and the $s^{th}$ SCC
\\ 
$\rho_{P/s}$            & Normalization factors of the PCC and the $s^{th}$ SCC, respectively
\\ 
$\gamma_{P/s}(t)$             & Signal-to-interference-and noise ratio (SINR) of the PCC and $s^{th}$ SCC, respectively
\\ 
$N_{th}$                &  Normalized threshold for successful packet delivery
\\ 
 $\vert Q_U^{P}(t)\vert$              &  Quantity of data packets successfully received at the UE side
\\ 
$T$, $L$, $N$               &  Duration of transmission time slots, the data packets generated from IP flows and the length of prediction period, respectively
\\ 
$B(t)$               & $B(t)=\vert Q_P^R(t)\vert-\sum_{s=1}^{N_{SCC}}\vert Q_s^R(t)\vert$
\\ 
$K_p(t)$, $K_i(t)$, $K_d(t)$               & Time varying coefficients of proportion,integration and derivation, respectively
\\
\emph{CA}                &  Carrier Aggregation
\\ 
\emph{PCC}                & Primary Component Carrier
\\ 
\emph{SCC}                & Secondary Component Carrier
\\ 
\emph{SDAP}                &  Service Data Adaptation Protocol
\\ 
\emph{PDCP}                &  Packet Data Convergence Protocol
\\ 
\emph{RLC}                & Radio Link Control
\\ 
\emph{MAC}                &  Medium Access Control
\\ 
\emph{PHY}                &  Physical
\\ 
\emph{PID}                & Proportional Integral Derivative
 \\ \hline
\end{tabular}
\end{table*}

\section{System Model} \label{sect:sys}
\begin{figure*}
\centering
\includegraphics[width = 165mm, height =80 mm]{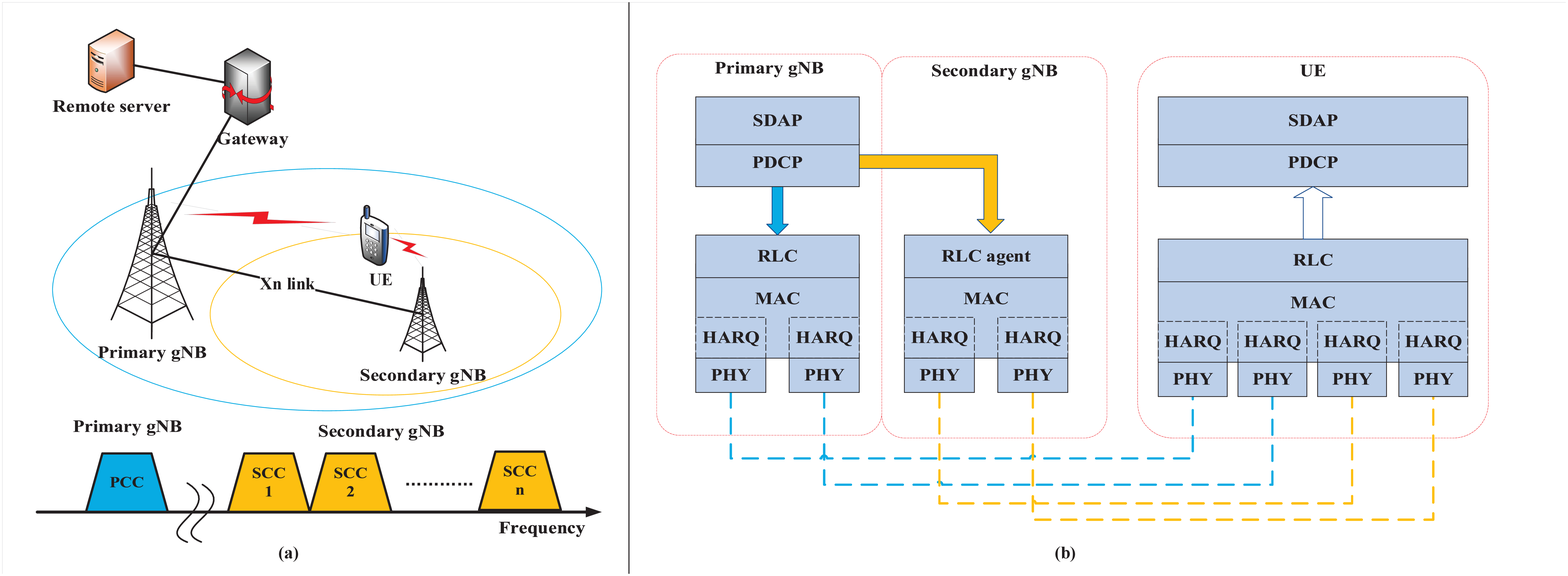}
\caption{Diagram of carrier aggregation across different cells. }
\label{fig:system}
\end{figure*}
Consider an inter-band non-contiguous CA system as shown in Fig.~\ref{fig:system}, where a user equipment (UE) connects a primary 5G new radio BS (gNB) and a secondary gNB simultaneously. The primary gNB transmits on a sub-6 GHz PCC and the secondary gNB can deliver information on $N_{SCC}$ SCCs (denoted by $\mathcal{S}$) in mmWave bands. Two gNBs can communicate with each other via the Xn link as defined in \cite{17} and the associated delay is simply normalized to $d_{X_n}$. Based on the 5G U-plane protocol stack \cite{18}, an inter-band non-contiguous CA transmission contains the following procedures as illustrated below.\footnote{ In our formulation, we do not consider any specific constraints on sub-6 GHz and mmWave bands. but in the numerical evaluation, we adopt the practical values of sub-6 GHz and mmWave bands to obtain an insightful results of deployed 5G networks.} 

\begin{itemize}
 \item{\em Service Data Adaptation Protocol (SDAP):} The main target of SDAP layer is to map different QoS requirements to data radio bearers (DRBs). For illustration purposes, we assume a simple transparent transmission policy is adopted and the lower layers can directly receive $L$ data packets generated from higher layers in each time slot.
 \item{\em Packet Data Convergence Protocol (PDCP):} Denote 
 $Q^{P}(t)=\{N^{P}_{\min}(t), N^{P}_{\min}(t)+1,\ldots, N^{P}_{\max}(t)\}$ to the buffered packets of PDCP layer at the time slot $t$. We can have the status update expressions as follows.\footnote{ According to 3GPP Release 15 specfication \cite{Nstand}, the SDAP layer is implemented in PCC only to guarantee different QoS requirements to DRBs.}
 \begin{eqnarray}
 Q^{P}(t+1) & = & Q^{P}(t) \cup Q^{P}_{i}(t) - Q^{P}_{P,o}(t) \nonumber \\ 
 && - \cup_{s \in \mathcal{S}} Q^{P}_{s,o}(t), \label{eqn:PDCP_Q}
 \end{eqnarray}
 where $Q^{P}_{i}(t)$ denotes the arrival packets from upper layers, and $Q^{P}_{P/s,o}(t)$ denotes the departure packets to the PCC or the $s^{th}$ SCC, respectively. The maximum and minimum packet indices in $Q^{P}(t+1)$ are updated by,
 \begin{eqnarray}
 N^{P}_{\max}(t+1)&= N^{P}_{\max}(t) + |Q^{P}_{i}(t)|, \\
 N^{P}_{\min}(t+1)&= N^{P}_{\min}(t) + |Q^{P}_{P,o}(t)|\nonumber\\&+ |\cup_{s \in \mathcal{S}} Q^{P}_{s,o}(t)|,
 \end{eqnarray}
 where $|\cdot|$ denotes the cardinality of the inner set, $|Q^{P}_{i}(t)| = L$, and $|\mathcal{S}| = N_{SCC}$.
 Denote $A_{P/s}(t)$ to be the transmission strategy for the PCC and $s^{th}$ SCC at the time slot $t$ in the  traffic splitting mechanism. $Q^{P}_{P/s,o}(t)$ can be updated via the following expressions.
\begin{eqnarray}
\vert Q^{P}_{P/s,o}(t+1)\vert &=\vert Q^{P}_{P/s,o}(t)\vert+A_{P/s}(t), \label{eqn:P_OUT}
\end{eqnarray}
where $A_{P/s}(t) = 1$ indicates the buffered packet of PDCP layer is successfully transmitted to the PCC or $s^{th}$ SCC, and $A_{P/s}(t) = 0$ otherwise.
 \item{\em Radio Link Control (RLC):} This layer receives departure packets of the PDCP layer and sends the processed packets to the UE side via lower layers.  Therefore, the RLC buffer status of the PCC and $s^{th}$ SCC, e.g., $Q_{P/s}^R(t+1)$, can be updated via the following relations.
 \begin{eqnarray}
 Q_{P/s}^R(t+1)&=&Q_{P/s}^R(t)  \cup  Q_{P/s,o}^P(t)\nonumber\\&&- S_{P/s}^M(t), \label{eqn:RLC_Q}
\end{eqnarray}
where $S_P^M(t)$ and $S_s^M(t)$ denote the processing capabilities of lower layers.

\item{\em Medium Access Control (MAC) \& Physical (PHY):} In this layer, we use the abstracted model of MAC and PHY layers and the MAC and PHY layer transmission rates of PCC and several SCCs are given by, 
\begin{eqnarray}
\vert S_{P}^M(t)\vert&=& \left\{
\begin{aligned}
&\lfloor \rho_{P}/\rho_{s} \rfloor  ,~~~~~~~\textrm{if}
\\&~~\rho_{s} \log_2(1+\gamma_{P}(t))\geq N_{th}, \label{eqn:M_P}\\
&0  , ~~~~~~~ \textrm{otherwise}.
\end{aligned}
\right.\\
\vert S_{s}^M(t)\vert& =& \left\{
\begin{aligned}
1 & , ~~~~~~~~~~~~~~~~~~~ \textrm{if} \\&\rho_{s}\log_2(1+\gamma_{s}(t))\geq N_{th},\label{eqn:M_s} \\
0 & , ~~~~~~~ \textrm{otherwise}.
\end{aligned}
\right.
\end{eqnarray}
In the above equations, $\rho_{P/s}$ are the normalization factors, which include the effects of bandwidth, payloads and transmission durations. $\gamma_{P/s}(t)$ denote the signal-to-interference-and noise ratio (SINR) of the PCC and $s^{th}$ SCC, respectively. $N_{th}$ represents the normalized threshold for successful packet delivery. In the practical deployment, we often utilize sub-6 GHz band for PCC and mmWave bands for SCCs, and the corresponding SINR expressions are given by \cite{LTEstudy,2016study},
\begin{eqnarray}
\gamma_{P}(t)&=&PT_P-10\cdot\log_{10}(h_P\cdot\alpha_P(t))\label{eqn:sub_6G}\\
\gamma_{s}(t)&=&PT_s-10\cdot\log_{10}(h_s\cdot\alpha_s(t))\label{eqn:mmwave}
\end{eqnarray}
where $PT_{P/s}$ denotes the transmission power of the PCC and $s^{th}$ SCC, and $h_{P/s}$ are the normalized path losses for PCC-UE and the $s^{th}$ SCC-UE links. $\alpha_{P/s}(t)$ are the time-varying fading coefficients, where $\alpha_{P}(t)$ follows a Rice distribution with unit mean and variance $\sigma_{P}^2$, and $\alpha_{s}(t)$ follows a Rayleigh distribution with unit mean and variance $\sigma_s^2$, as specified in \cite{Rice,Ray}.
\end{itemize}

With the CA transmission strategy and the abstracted MAC and PHY layer models, the target UE collects all the data packets from different component carriers in the RLC layer and sends them to the PDCP layer according to \cite{19}. Finally, the UE received packets in the PDCP layer can be modeled as follows.

\begin{eqnarray}
  \vert Q_U^{P}(t) \vert&=&\vert S_{P}^M(t)\vert+\sum_{s=1}^{N_{SCC}}
 \vert S_{s}^M(t)\vert, \label{eqn:f_func}
\end{eqnarray}
where $  \vert Q_U^{P}(t)\vert$ indicates the number of data packets successfully received at the UE side.

The following assumptions are adopted throughout the rest of this paper. First, the data processing among different layers is assumed to be zero delays and error-free. Second, the packet lengths of different layers are assumed to be fixed for simplicity and the headers of different layers are considered to be negligible. Third, infinite buffer sizes are assumed for different layers, such that the buffer overflow effect is not considered. Moreover, the RLC layer works in the acknowledgment mode according to \cite{20}. For illustration purposes, we summarize all the notations and acronyms in Table~\ref{table:Notation}.

\section{Problem Formulation} \label{sect:Pro}

In this section, we formulate the transmission duration minimization problem based on the above CA transmission model. In order to adapt with the wireless fading environment, we consider a dynamic packet transmission strategy $A_{P/s}(t)$ in the PDCP layer. With the exact expressions of $Q^{P}_{P/s,o}(t)$, the transmission duration $T$ is thus determined by accumulating the value of $ \vert Q_U^P(t)\vert$, and the optimal packet transmission strategy in the  traffic splitting mechanism can be obtained by solving the following minimization problem. 
\begin{Prob}
[Original Problem] \label{prob:ori} The optimal transmission duration can be achieved by solving the following optimization problem.
\begin{eqnarray}
\underset{\{A_{P/s}(t)\}}{\textrm{minimize}}  && T, \label{eqm:ori_obj}\\
\textrm{subject to} 
                && \textcolor{red}{\eqref{eqn:PDCP_Q}-\eqref{eqn:f_func}},\ A_{P/s}(t)\in\{0, 1\},\notag \\
                &&\sum_{t=0}^{T}\vert Q_U^{P}(t)\vert\geq L ,\label{eqn:cons2} \\
                &&  Q^{P}_{P,o}(t) \cup \{\cup_{s \in \mathcal{S}} Q^{P}_{s,o}(t)\}\subset  Q^P(t), \label{eqn:cons3} \nonumber\\
                &&\\
                && Q^{P}_{i,o}(t) \cap Q^{P}_{j,o}(t)= \emptyset, \notag \\
                 && \forall \ i,j \in \{P\} \cup \mathcal{S}, \textrm{and} \ i \neq j 
                \label{eqn:cons4}.
\end{eqnarray}
where \eqref{eqn:cons3} and \eqref{eqn:cons4} guarantee that all delivered packets to PCC and SCCs RLC layers are taken from the current PDCP buffer and do not overlap with each other.
\end{Prob}

The above problem is in generally difficult to solve due to the following reasons. First, the searching space of packet transmission strategies grows exponentially concerning the sizes of $Q^{P}(t)$ and the optimal strategy is a typical mixed-integer optimization problem (MIOP). Second, due to the time-varying wireless conditions, e.g., $\gamma_P(t)$ and $\gamma_s(t)$, the searching spaces of $Q_{P/s,o}^P(t)$ are dynamically changing, which further increases the searching complexity as well. To make it mathematically tractable, we focus on the following approximated dynamic programming problem, where the instantaneous transmission action, e.g.,$\{A_{P/s}(t)\}$, can be determined by the following maximization problem.
\begin{Prob}[Approximated Problem] \label{prob:equ}
For any given time slot $t^{\prime}\in \left[0,T\right]$, the optimal packet allocation action can be determined as follows.
\begin{eqnarray}
\underset{A_{P/s}(t^{\prime})}{\textrm{maximize}}  &&  \sum_{t=t^{\prime}}^{t^{\prime}+N}\mathbb{E}\left[ \vert Q_U^{P}(t)\vert\big| \{Q_{P/s}^R(t^{\prime})\}, A_{P/s}(t^{\prime}) \right], \nonumber\label{eqn:equ}\\\\
\textrm{subject to} 
 && \eqref{eqn:M_P}-\eqref{eqn:mmwave}, \eqref{eqn:cons3}-\eqref{eqn:cons4}, \nonumber \\
                &&\vert Q_{P/s}^R(t^{\prime})\vert \in \{0,1,...,L\} \label{eqn:cons5}.
\end{eqnarray}
where $N \ll T$ denotes the length of prediction period in the near future.
\end{Prob}
\begin{theorem}\label{thm:approx}
Problem \ref{prob:ori} can be well approximated by recursively solving Problem \ref{prob:equ}.
\end{theorem}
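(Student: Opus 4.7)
The plan is to recast Problem~\ref{prob:ori} as a stochastic shortest-path Markov decision process (MDP) and show that Problem~\ref{prob:equ} is precisely its receding-horizon approximation. First I would take the system state at slot $t$ to be $x(t) = (Q^P(t), \{Q_{P/s}^R(t)\}, \{\gamma_{P/s}(t)\})$ and the control to be the binary vector $A_{P/s}(t)$. The terminal constraint \eqref{eqn:cons2} then defines an absorbing set where the accumulated reward $\sum_{t=0}^{T}|Q_U^P(t)|$ first exceeds $L$, and the objective $T$ becomes the hitting time. By a monotonicity argument---delivering more packets at any given slot cannot delay the absorption time, since \eqref{eqn:f_func} depends only on the current-slot MAC/PHY success indicators---minimizing $\mathbb{E}[T]$ is equivalent to maximizing the expected cumulative delivery $\mathbb{E}\bigl[\sum_{t=t^{\prime}}^{t^{\prime}+N}|Q_U^P(t)|\bigr]$ at every slot $t^{\prime}$ up to absorption.

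Second, I would derive the Bellman optimality equation induced by this reformulation,
\begin{equation}
V^\star(x) = \max_{A_{P/s}}\, \mathbb{E}\!\left[|Q_U^P(t)| + V^\star(x(t+1)) \,\middle|\, x(t)=x,\, A_{P/s}\right],
\end{equation}
and observe that Problem~\ref{prob:equ} is obtained by unrolling this recursion for $N$ steps under the propagation rules \eqref{eqn:PDCP_Q}--\eqref{eqn:f_func} and discarding the tail $V^\star(x(t^{\prime}+N+1))$. The per-slot feasibility constraints $A_{P/s}(t^{\prime})\in\{0,1\}$ together with the packet-selection constraints \eqref{eqn:cons3}--\eqref{eqn:cons4} are carried over unchanged, so recursively executing Problem~\ref{prob:equ} at $t^{\prime}=0,1,\ldots$ produces an admissible policy for Problem~\ref{prob:ori}.

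Third, I would bound the suboptimality introduced by the $N$-step truncation. The structural property I would exploit is that the SINR processes $\gamma_{P/s}(t)$ in \eqref{eqn:sub_6G}--\eqref{eqn:mmwave} are driven by independent Rice/Rayleigh fading, the RLC buffer update \eqref{eqn:RLC_Q} is one-step coupled, and the per-slot payloads are uniformly bounded by \eqref{eqn:cons5}. A standard telescoping argument on the Bellman residual then yields $|V^\star(x)-V^{(N)}(x)| = O(\rho^N)$ for some contraction rate $\rho<1$, where $V^{(N)}$ denotes the rollout value used in Problem~\ref{prob:equ}. Aggregating this per-slot error over the $O(L)$ pre-absorption slots gives the claimed approximation, and hence Theorem~\ref{thm:approx} follows. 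The main obstacle is pinning down $\rho$ quantitatively: since the MDP is not uniformly unichain under every admissible $A_{P/s}$, one must either invoke a mild mixing condition on the coupled buffer–channel chain or restrict attention to policies that visit PCC and SCC actions sufficiently often, both of which are reasonable in the CA regime of interest.
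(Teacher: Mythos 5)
Your proposal follows essentially the same route as the paper's Appendix~A: converting the minimization of $T$ into maximization of the cumulative number of successfully received packets (exploiting the fixed total $L$), recasting Problem~\ref{prob:ori} as an MDP with the buffer status as state, $A_{P/s}(t)$ as action, and $\vert Q_U^{P}(t)\vert$ as reward, writing the Bellman equation for the Q-function, and then truncating to an $N$-step horizon to obtain Problem~\ref{prob:equ}. The only substantive difference is your third step, which attempts a quantitative $O(\rho^N)$ truncation-error bound via a contraction argument---the paper provides no such analysis (its proof stops at the $N$-step truncation, justified only by complexity considerations)---so your sketch, modulo the contraction/mixing condition you yourself flag as unresolved for this undiscounted, not-uniformly-unichain chain, is if anything more ambitious than the published argument.
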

\begin{proof}
Please refer to Appendix~\ref{appd:thm1} for the proof.
\end{proof}

By applying Theorem~\ref{thm:approx}, the original transmission duration minimization problem has been modified to maximize the expected number of successfully received packets in a known state, where conventional single-slot greedy based algorithms can be used \cite{21}. Since the evaluation of $  \vert Q_U^{P}(t)\vert$ highly relies on the time-varying variables $S_{P/s}^M(t)$ and ideal feedback scheme from the target UE, greedy based algorithms can hardly be implemented in practice. To deal with that, we realize that the RLC layer buffer status can be utilized to model the behaviors of $S_{P/s}^M(t)$, and with some mathematical manipulations as specified in Appendix~\ref{appd:thm2}, we have the following
simplified version.
\begin{Prob}[Simplified Problem]
For any given time slot $t^{\prime}\in \left[0,T\right]$, the simplified transmission strategies can be determined as follows.
\label{prob:rec}
\begin{eqnarray}
\underset{A_{P/s}(t^{\prime})}{\textrm{minimize}}  && \big\vert B(t^{\prime})+(N+1)\cdot\big[A_{P}(t^{\prime})-\notag\\&&\sum_{s=1}^{N_{SCC}}A_{s}(t^{\prime})-(\lfloor\rho_{P}/\rho_{s}\rfloor-N_{SCC})\big]\big\vert,\notag\\\\
\textrm{subject to} 
 && \eqref{eqn:M_P}-\eqref{eqn:mmwave}, \eqref{eqn:cons5}, t^{\prime}\in \left[0,T\right],\nonumber\\
 && B(t^{\prime})=\vert Q_P^R(t^{\prime})\vert-\sum_{s=1}^{N_{SCC}}\vert Q_s^R(t^{\prime})\vert. \label{eqn:b(t)}
\end{eqnarray}
\end{Prob}

Based on Problem \ref{prob:rec}, we can derive several low-complexity algorithms as shown later. This is because all the changing variables in Problem \ref{prob:rec} can be locally obtained without any interactions with the terminal side.
\begin{figure}
\centering
 \includegraphics[width = 3.3in]{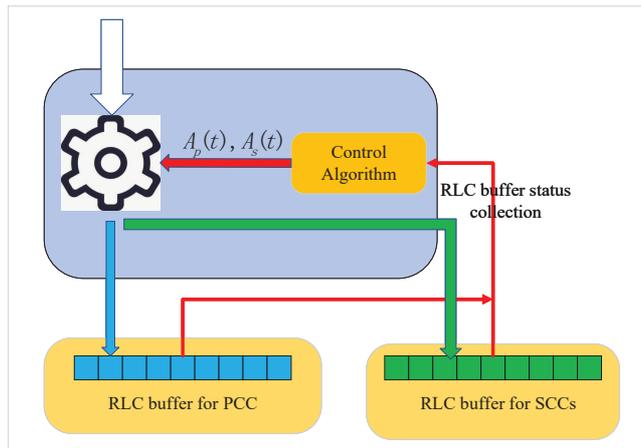}
\caption{The schematic diagram of traffic splitting mechanism based on RLC buffer status.}
\label{fig:Mec}
\end{figure}
\section{Proposed Traffic Splitting Mechanism} \label{sect:des}
In this section, instead of solving the above optimization problem using the brute force approach, we propose a low complexity dynamic traffic splitting mechanism using fuzzy logic control structure \cite{22}. As shown in Fig.~\ref{fig:Mec}, the proposed traffic splitting mechanism\footnote{ In our formulation, we adopt an abstract general model, in order to develop a generic algorithm. 
} collects RLC buffer information to obtain $B(t')$, and then determine the dynamic packet transmission strategy $A_{P/s}(t')$ by solving Problem \ref{prob:rec}.

As shown in Algorithm~\ref{Alg:Algorithm}, the proposed fuzzy logic control-based algorithm can be divided into the following two stages, e.g., initialization and adaptation.

\begin{itemize}
    \item{\em Stage 1 (Initialization)}: In the initialization stage when $t^{\prime}\leq N$, we simply let $A_{P/s}(t') = 1$, since there is limited information about the packet transmission process.
    \item{\em Stage 2 (Adaptation)}: In the adaptation stage when $t^{\prime} > N$, we first obtain $B(t')$ according to \eqref{eqn:b(t)}, and then determine the transmission mode according to the value of $B(t')\cdot B(t'-1)$.
    
    {\em Static Mode $\left(B(t')\cdot B(t'-1) > 0\right)$}: In the static mode, we follow the previous packet transmission strategy and simply choose the current actions $A_{P/s}(t')$ as,
    \begin{eqnarray}
    A_{P/s}(t') = A_{P/s}(t'-1), \forall t'.
    \end{eqnarray}
    
    {\em Dynamic Mode $\left(B(t')\cdot B(t'-1) \leq 0\right)$}: In the dynamic mode, the current transmission actions are determined from the previous actions in the last $N$ time slots. For illustration purposes, we define $k(t')$ as the auxiliary variable, where the mathematical expression is given as, 
\begin{eqnarray}
k(t')&=& \left\lfloor\frac{\sum_{s=1}^{N_{SCC}}\sum_{n=1}^{N}A_s(t'-n)}{\sum_{n=1}^{N}A_P(t'-n)} \right\rfloor. \label{eqn:k_func}
\end{eqnarray}

\floatname{algorithm}{Algorithm}   
\renewcommand{\algorithmicrequire}{\textbf{input:}}  
\renewcommand{\algorithmicensure}{\textbf{output:}}  
\begin{algorithm}[h]
          \caption{Proposed Fuzzy Logic Control Based Algorithm} 
\label{Alg:Algorithm}
          
\begin{algorithmic}[1] %每行显示行号  
            \REQUIRE the RLC buffer status $\vert Q_P^R(t^{\prime})\vert$, $\vert Q_s^R(t^{\prime})\vert$; 
            \ENSURE  $A_P(t^{\prime})$,$A_s(t^{\prime})$ 
            \STATE Stage 1:
               \IF{$t^{\prime}\leq N$}
             \STATE $A_{P/s}(t^{\prime}) = 1$;
             \ELSE
              \STATE  go to Stage 2 
            \ENDIF  
              \STATE Stage 2:
               \STATE obtain $B(t^{\prime})$ according to \eqref{eqn:b(t)}
               \IF{$\left(B(t^{\prime})\cdot B(t^{\prime}-1) > 0\right)$}
             \STATE $ A_{P/s}(t^{\prime}) = A_{P/s}(t^{\prime}-1)$
             \ELSE
             \IF{$(t' mod N==0)$}
              \STATE update $K_p(t')$,$K_i(t')$,$K_d(t')$ according to 
               \STATE equation \eqref{eqn:K_update}
              \ELSE
             \STATE  $K_p(t')=K_p(t'-1)$, $K_i(t')=K_i(t'-1)$, 
             \STATE  $K_d(t')=K_d(t'-1)$
               \ENDIF 
             \STATE update $A_P(t^{\prime})$ according to equation \eqref{eqn:A_P}
             \STATE update $A_s(t^{\prime})$ according to equation \eqref{eqn:A_s}
            \ENDIF  
             \STATE output $A_P(t^{\prime})$,$A_s(t^{\prime})$;
            
 \end{algorithmic}  
\end{algorithm}
To enable a PID based control strategy, we form a second-order filtering algorithm to obtain the incremental value \cite{23}, $G(t')$, e.g.
\begin{eqnarray}
G(t')&= &K_p(t')\cdot\left[B(t^{\prime})-B(t^{\prime}-1) \right]\nonumber\\&&+K_i(t')\cdot B(t^{\prime})+K_d(t')\nonumber\\&&\cdot\left[B(t')-2B(t'-1)+B(t'-2) \right]\label{eqn:PID}.\nonumber\\
\end{eqnarray}
where $K_p(t')$, $K_i(t')$, and $K_d(t')$ represent the time varying coefficients of proportion, integration and derivation, respectively.
With the above calculated variable $k(t^{\prime})$ and $G(t^{\prime})$, we have the current transmission actions as, 
\begin{eqnarray}
 &&A_{P}(t')= \left\{
\begin{aligned}
&\delta\left(t'\bmod N-i\cdot k(t')\right), \textrm{if} \\
&~~~t'\bmod N \leq G(t')\cdot k(t').\label{eqn:A_P}\\
&\delta\left(t'\bmod N-j\cdot\left[ k(t')+1\right]\right),\\
&~~~~~~~\textrm{otherwise}.
\end{aligned}
\right.\\
&&A_{s}(t')=1-A_{P}(t'),\forall s \in S, \label{eqn:A_s}
\end{eqnarray}
where $i\in [1,G(t^{\prime})]$, $j\in[1,N-G(t^{\prime})]$, and $\delta(\cdot)$ is the unit-impulse function as defined in \cite{24}.
\end{itemize}

The convergence property of the above proposed algorithm is greatly affected by the values of $K_p(t')$, $K_i(t')$, and $K_d(t')$ as proved in \cite{25}. In order to adapt to different application scenarios, we use fuzzy control-based solution to dynamically adjust the above control parameters of PID \cite{Fuzzy_Prove}, which is able to achieve the desired performance with unknown nonlinearities, processing delays, and disturbances. As shown in the Fig.~\ref{fig:PID}, the fuzzy control-based PID parameter optimization consists of three modules, namely fuzzifier, fuzzy inference, and defuzzfier. 

In the fuzzifier module, we first normalize the value of $B(t')$ and $B(t') - B(t'-1)$ by the maximum value $B_{max}$, and then calculate the corresponding membership degrees $\mathcal{D}_{B}(t')$ and $\mathcal{D}_{E}(t')$ according to the triangular membership function $M(\cdot)$ \cite{26}, respectively.
\begin{eqnarray}
\mathcal{D}_{B}(t') & = & M\left( \frac{B(t')}{B_{\max}} \right), \\
\mathcal{D}_{E}(t') & = & M\left( \frac{B(t') - B(t'-1)}{2 \times B_{\max}} \right).
\end{eqnarray}
In the fuzzy inference module, three fuzzy rule tables, including $\mathcal{T}_p(\cdot)$, $\mathcal{T}_i(\cdot)$, and $\mathcal{T}_d(\cdot)$ as defined in \cite{27} are applied to update $K_p(t')$, $K_i(t')$ and $K_d(t')$, respectively. In the defuzzifier module, the incremental output values of $K_p(t')$, $K_i(t')$ and $K_d(t')$ are obtained through,
\begin{eqnarray}
 \left\{
\begin{aligned}
&K_{p}(t')=K_{p}(t'-1)+\Delta K_{p}(t'),\label{eqn:K_update}\\
&K_{i}(t')=K_{i}(t'-1)+\Delta K_{i}(t'),\\
&K_{d}(t')=K_{d}(t'-1)+\Delta K_{d}(t'),
\end{aligned}
\right.
\end{eqnarray}
where $\Delta K_{p/i/d}(t^{\prime})$ can be calculated via
\begin{eqnarray}
&&\Delta K_{p/i/d}(t^{\prime})=\begin{bmatrix}\mathcal{D}_{B}(t')&1-\mathcal{D}_{B}(t')\end{bmatrix}
\cdot \mathcal{T}_{p/i/d}\nonumber\\&&\cdot\left(\begin{tiny}\begin{bmatrix} \mathcal{D}_{B}(t')\cdot \mathcal{D}_{E}(t')& \mathcal{D}_{B}(t')\cdot(1-\mathcal{D}_{E}(t'))\\
(1-\mathcal{D}_{B}(t'))\cdot \mathcal{D}_{E}(t')&(1-\mathcal{D}_{B}(t'))\cdot (1-\mathcal{D}_{E}(t')) \end{bmatrix}\end{tiny}\right)\notag\\
&&\cdot\begin{bmatrix}\mathcal{D}_{E}(t')\\1-\mathcal{D}_{E}(t')\end{bmatrix}.
\end{eqnarray}

\begin{figure}
\centering
 \includegraphics[width = 3.3in]{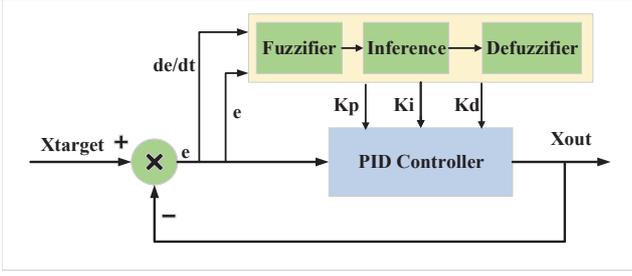}
\caption{  Fuzzy-PID structure  }
\label{fig:PID}
\end{figure}
\begin{figure}[h]
\centering
 \includegraphics[width = 3.3in]{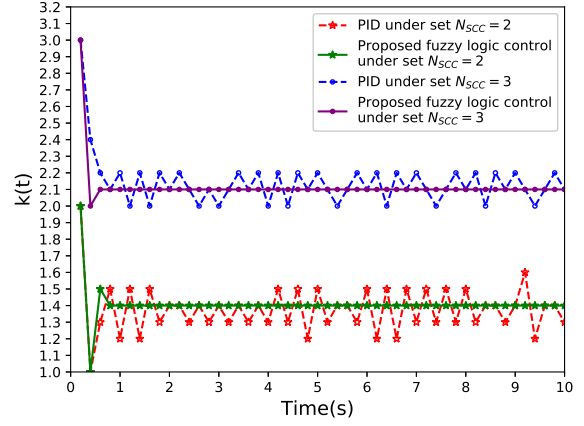}
\caption{The convergence property of PID control and the proposed fuzzy logic control under the number of $N_{SCC}$ setting.}
\label{fig:split}
\end{figure}

The above  traffic splitting mechanism has the following advantages. First, we use a two-stage control algorithm to quickly approach the optimal strategy during the initialization period and keep the algorithm stable during the adaptation period. In the initialization period, we simply set all the actions to be active, e.g., $A_{P/s} = 1, \forall s$, to fulfill the buffer with the shortest period. In the adaptation period, we update the transmission actions based on a historial observation of $N$ time slots and the output value $G(t')$ of PID algorithm. Through this approach, we can quickly increase the number of transmission packets to explore the optimal transmission strategy and slightly adjust the transmission strategy according to the historical transmission strategy and buffer difference to maintain the stability of the algorithm. Second, we use the PID control algorithm with fuzzy based parameter optimization to guarantee the quick convergence property in different scenarios as proved in \cite{28}, \cite{29}. As shown in Fig.~\ref{fig:split}, the proposed fuzzy logic control based traffic splitting algorithm can quickly converge to the optimal value\footnote{ In the static user scenario with flat fading channel conditions, the optimal value should be equal to the ratio of link transmission capacities of PCC and $N_{SCC}$ SCCs as derived in \cite{GoodPUT}.}  with less than two rounds adaptation for both $N_{SCC} = 2$ and $N_{SCC} = 3$ cases.

\section{Experiment Results} \label{sect:exper}

In this section, we provide some numerical results to verify the proposed fuzzy logic control-based adaptive packet transmission mechanism. To provide a fair comparison, we use Network Simulator 3 (NS-3), currently implementing a wide range of protocols in C++ \cite{30}, with the most up-to-date 5G CA protocols as defined in \cite{31}. We simulate real network scenarios, using non-line-of-sight (NLOS) for communication in an urban macro fading condition \cite{2016study}, and other important simulation parameters are listed in Table~\ref{table:Parameters}. All the numerical simulations are performed on a Dell Latitude-7490 with i7-8650 CPU and 16GB memory. 
\def\tablename{Table}
\begin{table*}
\renewcommand\arraystretch{1.5}
\centering 
\caption{Simulation Parameters.}
	\begin{tabular}{c c c c}
	\hline \textbf{ Parameter } & \textbf{Value}&\textbf{ Parameter } & \textbf{Value} \\
\hline
Frequency of PCC& 4.9GHz &Bandwidth of PCC  &  100MHz   \\
\hline 
Frequency of SCC & 28GHz  &
Bandwidth of SCC &  100MHz      \\
\hline 
Transmission power of PCC & 28dBm  &
Transmission power of SCC &  35dBm      \\
\hline
TCP Congestion Control Algorithm & NewReno
&UE TCP Receive Window Size& 512KB\\
\hline
RLC Layer Transport Mode   & AM&
RLC layer Polling PDU Threshold  & 100\\
\hline
Xn link delay& 2ms&
Xn link data rate & 1Gbps\\
\hline 
 Variance  $\sigma_P^2$ & 0.0004  &
Variance $\sigma_s^2$ & 0.27      \\
\hline 
\end{tabular}
\label{table:Parameters}
\end{table*}

In order to provide a more intuitive result, we define the link resource utilization ratio, $\eta$, to be the performance measure, which is given by,
\begin{eqnarray}
    \eta=\frac{\sum_{t=1}^T \vert Q_U^P(t)\vert}{\sum_{t_p=1}^T \vert Q_U^P(t_p)\vert+\sum_{t_S=1}^T \vert Q_U^P(t_S)\vert}\times 100\%.
    \label{eqn:eta}
\end{eqnarray}
In the above expression, $t_p$ and $t_S$ denotes the time indexes of PCC transmission only and $N_{SCC}$ SCCs transmission only modes, respectively. The physical interpretation is the ratio of average end-to-end PDCP throughput of CA transmission over the maximum end-to-end PDCP transmission throughput provided by PCC and $N_{SCC}$ SCC links. 

In the following numerical examples, we test the average end-to-end PDCP throughput and the link resource utilization ratio, $\eta$, with several baselines. {\em Baseline 1 (BWA)} \cite{15}: The packet transmission strategy is to allocate the PDCP packets according to the available bandwidths of PCC and $N_{SCC}$ SCCs. {\em Baseline 2 (LTR)}  \cite{16}: The packet transmission strategy is to allocate the PDCP packets according to the measured end-to-end link delay. {\em Baseline 3 (No Fuzzy):} The packet transmission strategy is similar to our proposed mechanism except that the PID control parameters are not optimized using fuzzy processes. {\em Baseline 4 (Q-learning)} \cite{Q-learn}: The packet transmission strategy is to allocate the PDCP packets based conventional reinforcement learning based approach as proposed in \cite{de2019dm}.

%实验1.以n取2验证不同分流比下buffer差值以及吞吐量大小
%实验2,以n取2分流算法部署吞吐量随变化和
%实验3,移动场景实验

%buffer difference and throughput
\subsection{Buffer Status versus Average Throughput}

%the buffer difference $B(t)$ versus
%we plot the average end-to-end throughput and the buffer difference $B(t)$ versus different transmission strategy.
%we rewrite equation \eqref{eqn:k_func}
In Appendix~\ref{appd:thm2}, we derive that the buffer difference $\vert B(t)\vert$ is negatively correlated with the throughput of PDCP layer that we can obtain the maximum throughput by minimizing the buffer difference $\vert B(t)\vert$. To numerically demonstrate the above transformation, we plot the average end-to-end throughput and the buffer difference $\vert B(t)\vert$ versus different transmission strategies and different numbers of $N_{SCC}$ setting. 
%To provide a more straightforward view of transmission strategy, we rewrite equation \eqref{eqn:k_func} and let $k(t)$ retain one decimal place to represent different transmission strategies.
% calculate resource link utilization, and  to observe the change of the buffer difference more intuitively, we always take the absolute value of buffer difference.

\begin{figure}
\centering
 \includegraphics[width = 3.3in]{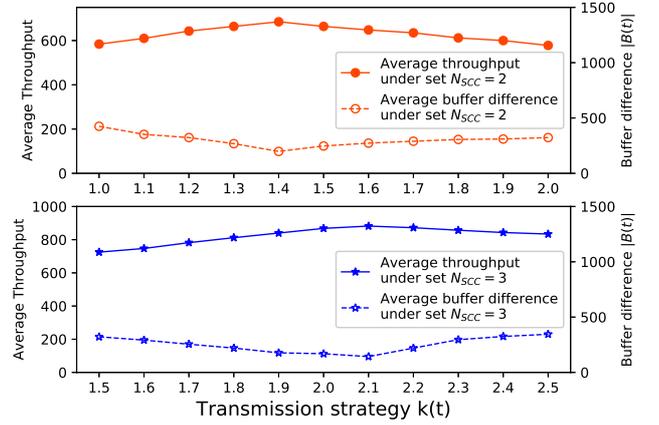}
\caption{Numerical results on throughput and average buffer difference $\vert B(t)\vert$ under different transmission strategies $k(t)$ and the number of $N_{SCC}$ setting.}
\label{fig:buffer_differ}
\end{figure}
The experimental results are shown in Fig.~\ref{fig:buffer_differ}, the buffer difference $\vert B(t)\vert$ decreases when the PDCP throughput gradually increases, and vice versa. Only when the buffer difference reaches the minimum value, the PDCP throughput reaches the maximum. Therefore, the negative correlation between buffer difference $\vert B(t)\vert$ and throughput has been verified that we can adjust the transmission strategy through the buffer difference to obtain the maximum throughput.

\subsection{Static User Scenario}
In the static user scenario, the receiving UE for multi-stream CA transmission remains static, which is located $100$ meters away from the primary gNB during the entire transmission period. Without the UE mobility, the long-term channel statistics remain stable in this case, and we plot the link resource utilization ratio versus $N_{SCC}$ in Fig.~\ref{fig:stationary} to demonstrate the benefits of the proposed adaptive traffic splitting mechanism.

As shown in Fig.~\ref{fig:stationary}, the proposed adaptive traffic splitting scheme outperforms all four conventional baselines under different $N_{SCC}$ settings. For baselines 1 to 4, the achievable link resource utilization ratios are between 80\% to 92\%, while our proposed scheme can reach as much as 95\%. Meanwhile, since the entire bandwidth of SCCs is much greater than PCC as the number of SCCs increases, the advantages of the proposed splitting scheme compared to four baselines in terms of the link resource utilization ratio is decreasing. 

This is due to the following three reasons. First, by comparing with baseline 1, the proposed splitting scheme allows to dynamically adjust the packet transmission strategy for PCC and SCCs, which is more robust for dynamic channel variations. Second, by comparing with baseline 2 and baseline 4, the proposed splitting scheme estimates the transmission capability without explicit feedback from UEs and the end-to-end throughput evaluation as a reward, which saves the transmission bandwidth for delay feedback and the computational complexity for policy evaluation. Third, by comparing with baseline 3, the proposed splitting scheme dynamically optimizes the PID control parameters to obtain an additional 2-3\% improvement of the link resource utilization ratio.

\begin{figure}
\centering
 \includegraphics[width = 3.3in]{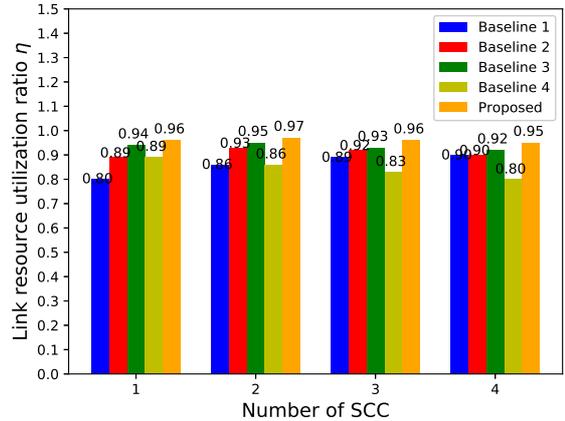}
\caption{Bar graph of link resource utilization ratio of static users with different carrier numbers under different traffic splitting mechanism deployments.}
\label{fig:stationary}
\end{figure}

\begin{figure}[h]
\centering
 \includegraphics[width = 3.3in]{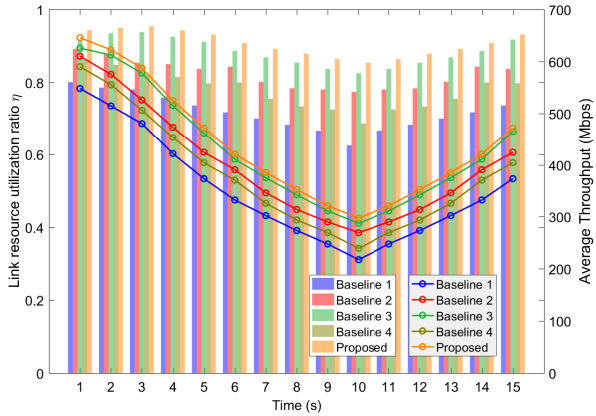}
\caption{In mobile user scenario, the change curve of the
link resource utilization ratio $\eta$ and average throughput over time under different traffic splitting mechanism deployments.}
\label{fig:move}
\end{figure}
\subsection{Mobile User Scenario}
In the mobile user scenario, the receiving UE for multi-stream CA transmission is moving away from the primary gNB with a constant speed of $10 m/s$ and then getting back along the same path after 10 seconds. Different from the static user scenario, the end-to-end PDCP throughput will suffer from severe degradation due to the significant path loss when they are far apart. In this experiment, we keep the number of SCCs to be $N_{SCC} = 3$ and plot the link resource utilization ratio as well as the end-to-end PDCP throughput versus time in Fig.~\ref{fig:move}.

As shown in Fig.~\ref{fig:move}, the proposed adaptive traffic splitting scheme still outperforms all four baselines in terms of both the link resource utilization ratio and the end-to-end PDCP throughput as well. The link resource utilization ratios of four baselines are between 73\% to 89\%, while it reaches to 91\% for the proposed scheme, which corresponds to 3\% link resource utilization ratio improvement if compared with baseline 3 and more than 10\% if compared with other baselines. These numerical results further confirm that the proposed adaptive packet transmission scheme can quickly adapt to the drastically changing transmission capabilities.

\subsection{Complexity and Storage}
%实验设置，条件等介绍
%实验结果展示
%实验结果分析
% The proposed has lower time complexity and RAM utilization
% UE 和 terminal 同时出现
% 表格 性能指标单位（ms 和 %） complexity 和 storage
% static user / mobile user
In this experiment, we still keep the number of SCCs to be $N_{SCC} = 3$ and test the computational complexity and RAM usage of static and mobile user scenarios respectively.
As shown in Table~\ref{table:Compute}, the proposed adaptive traffic splitting mechanism has similar RAM usage as baseline 1 and 3, and the computational complexity of the proposed is slightly higher than that of baseline 1 and 3. This is because the fuzzy process requires additional optimization calculations. Baseline 2 shows the highest RAM usage with a large amount of data acquisition. Baselines 4 show the highest computational complexity, which is two times more than that of ours.
The proposed adaptive traffic splitting mechanism has lower complexity while ensuring optimal performance. This is because our mechanism is equivalent to the time-varying link capacity by observing the RLC buffer difference that the changing variables can be locally obtained without any interactions with the UE side.

\begin{table}[]
\caption{In the static and mobile user scenario, the computational complexity(ms) and RAM usage(\%) under the deployment of under different traffic splitting mechanism deployments.}
\begin{tabular}{cccc}
\hline
\textbf{Scenario}                                                      & \textbf{\begin{tabular}[c]{@{}c@{}} Splitting \\ Mechanism\end{tabular}} & \textbf{\begin{tabular}[c]{@{}c@{}} Computational\\ Complexity\end{tabular}} & \textbf{\begin{tabular}[c]{@{}c@{}}RAM\\ Usge\end{tabular}} \\ \hline
\multirow{4}{*}{\begin{tabular}[c]{@{}c@{}}Static\\ User\end{tabular}} & Baseline 1                                                                 & 3.6                                                                & 0.56                                                       \\
                                                                       & Baseline 2                                                                 & 8.1                                                                & 0.61                                                       \\
                                                                       & Baseline 3                                                                 & 3.7                                                                         & 0.56                                                                 \\
                                                                        & Baseline 4                                                                & 8.2                                                                         & 0.57                                                                 \\
                                                                       
                                                                       & Proposed                                                                   & 3.9                                                                       & 0.57                                                                  \\ \hline
\multirow{4}{*}{\begin{tabular}[c]{@{}c@{}}Mobile\\ User\end{tabular}} & Baseline 1                                                                 & 4.6                                                                         & 0.62                                                                  \\
                                                                       & Baseline 2                                                                 & 9.0                                                                         & 0.72                                                                \\
                                                                       & Baseline 3                                                                 & 4.5                                                                         & 0.62                                                                \\
                                                                             & Baseline 4                                                                & 10.2                                                                        & 0.63                                                                \\
                                                                       
                                                                       & Proposed                                                                   & 4.8                                                                         & 0.63                                                                \\ \hline
\end{tabular}
\label{table:Compute}
\end{table}

\section{Conclusion} \label{sect:con}
In this paper, we propose a low-complexity traffic splitting algorithm based on fuzzy PID in multi-stream CA scenario to better utilize the transmission capacities provided by sub-6 GHz and mmWave bands. Through end-to-end modeling of protocol stacks, our proposed traffic splitting algorithm is able to minimize the entire transmission duration with local RLC buffer information, which eventually improves the end-to-end PDCP throughput. Based on the numerical experiments, the proposed traffic splitting scheme can achieve more than 90\% link resource utilization ratio for both static and mobile user scenarios, and 50\% computational complexity reduction simultaneously. Through the above studies, we believe the proposed traffic splitting mechanism can be efficiently deployed in the practical 5G networks and achieve significant throughput improvement for multi-stream CA transmission with sub-6 GHz and mmWave bands. Meanwhile, since the proposed scheme does not rely on some specific constraints of mmWave bands, it can be easily extended to multi-stream carriers with different transmission rates.

\section*{Acknowledgement} \label{sect:Ack}
This work was supported by the National Natural Science Foundation of China (NSFC) under Grants 62071284,
61871262, 61901251 and 61904101, the National Key Research and Development Program of China under Grants
2019YFE0196600, the Innovation Program of Shanghai Municipal Science and Technology Commission under Grant
20JC1416400, Pudong New Area Science \& Technology Development Fund, and research funds from Shanghai Institute
for Advanced Communication and Data Science (SICS).

\theendnotes

% \section*{Appendix}
\begin{appendices}
% \appendices
\numberwithin{equation}{section}

\section{Appendix A}
% Proof of Theorem~\ref{thm:approx}
\label{appd:thm1}
Proof the Theorem~\ref{thm:approx}.

With limited packet length $L$, the problem of minimizing time $T$ can be transformed into maximizing the number of successfully received packets. Thus,
the objective function \eqref{eqm:ori_obj} in Problem~\ref{prob:ori} can be given as follows. 
\begin{eqnarray}
\underset{\{A_{P/s}(t)\}}{\textrm{maximize}}  &&  \sum_{t=0}^{T}\vert Q_U^P(t)\vert, \label{eqn:app_obj}
\end{eqnarray}
Without loss of generality, we use the standard Markov Decision Process (MDP) to represent the above problem. 

\textbf{Definition 1 (System State)}: Defined the state function $s(t)$ as RLC buffer status.
\begin{equation}
s(t) =\{ Q_P^R(t),Q_s^R(t)\},\forall s \in \mathcal{S}.
\end{equation}
where $\vert Q_{P/s}^R(t)\vert \in \{0,1,...,L\}$, the initial  state is $s(0) = (0,0,...,0)$. The state $s(t)$ is Markov’s which satisfies the following equation $P[s(t+1)|s(t)]=P[s(t+1)|s(0),...]$, $P[\cdot]$ is the state transition probability \cite{markov}.

\textbf{Definition 2 (Action)}:
The  action at the time $t$ is denoted
as $a(t)=A_{P/s}(t)$. It indicates whether the data packet is successfully transmitted to PCC or $s^{th}$ SCC RLC buffer.

\textbf{Definition 3 (Reward)}:
Defined the reward function $r(t)$ as the number of packets successfully received at the UE side. It is affected by current state $s(t)$ and action $A_{P/s}(t)$.
\begin{equation}
r(t)=  \vert Q_U^{P}(t)\vert.
\end{equation}
 Thus,
the objective function \eqref{eqm:ori_obj} in Problem~\ref{prob:ori} can be given as follows. 
\begin{eqnarray}
\underset{\pi=\{a(t)\}}{\textrm{maximize}}  &&  \sum_{t=0}^{T}\mathbb{E}\left[r(t)|s(t)\right]. \label{eqn:app_obj} 
\end{eqnarray}
where $\pi=\{a(t)\}$ denotes the entire action sets. By choosing $Q^{\pi}(s,a) = \mathbb{E}_{\pi} \big[r(t)| s(t) = s, a(t) = a\big]$ to be the expected Q-value function of taking action $a$ in state $s$ under a policy $\pi$, the optimal policy to solve \eqref{eqn:app_obj}, $\pi^{\star}$, can be obtained as,
\begin{eqnarray}
\pi^{\star} =  \underset{\pi}{\arg \max} \ Q^{\pi}(s,a).
\end{eqnarray}
According to \cite{32}, we can get the Bellman equation of $Q^{\pi}(s,a)$.
\begin{align}
Q^{\pi}(s,a)=\mathbb{E}_{\pi}\left[\right.r(t+1)+\gamma Q^{\pi}(s(t+1),a(t+1))\nonumber\\
|s(t)=s,a(t)=a\left.\right].
\end{align}
Although standard values and policy iterations can be used to find the optimal solution \cite{21}. However, due to the large state space, it is difficult to realize the computational complexity and memory size of the optimal solution. Therefore, we consider an N-step horizon expressed as follows

\begin{eqnarray}
\underset{a(t^{\prime})}{\textrm{maximize}}  && \sum_{t=t^{\prime}}^{t^{\prime}+N}\mathbb{E}\left[r(t)|s(t^{\prime})\right]. \label{eqn:app_obj} 
\end{eqnarray}
\section{Appendix B }
\label{appd:thm2}
Simplification of Problem~\ref{prob:equ}. 
 
Assuming that $L$ packets are transmitted and it is sufficient to be able to transmit the maximum capacity of all links. According to equation \eqref{eqn:P_OUT} and equation \eqref{eqn:RLC_Q}, $Q_{P/s}^R(t^{\prime})$ are equivalent to the following expressions
 \begin{eqnarray} 
 \vert Q_{P/s}^R(t^{\prime}+1)\vert&=&\vert Q_{P/s}^R(t^{\prime})\vert\nonumber\\&&+A_{P/s}(t)-\vert S_{P/s}^M(t)\vert.
\end{eqnarray}
We define $\vert Q_{P/s}^{\star}(t^{\prime})\vert$ and $\vert Q_{P/s}^{\star}(t^{\prime})\vert$ to represent the size of the PCC and $s^{th}$ SCC RLC buffers in the next $N$ time slots.
 \begin{eqnarray}
&&\vert Q_{P/s}^{\star}(t^{\prime})\vert=\vert Q_{P/s}^{R}(t^{\prime})\vert+\nonumber\\
&&\sum_{t=t^{\prime}}^{t^{\prime}+N}\mathbb{E}\left[\vert Q_{P/s}^R(t+1)\vert-\vert Q_{P/s}^R(t)\vert\right].
\end{eqnarray}
According to the relationship between receiving and sending data packets in the RLC buffer. It can be rewritten as 
\begin{eqnarray}
\vert Q_{P/s}^{\star}(t^{\prime})\vert&=&\vert Q_{P/s}^{R}(t^{\prime})\vert+\nonumber\\
&&\sum_{t=t^{\prime}}^{t^{\prime}+N}\mathbb{E}\left[A_{P/s}(t)-\vert S_{P/s}^M(t)\vert\right]. \label{eqn:Q_e}
\end{eqnarray}
The  RLC buffer status reflects the dynamic relationship between the  transmission strategy $A_{P/s}(t)$ and the transmission capacity $S_{P/s}^M(t)$.
 If the size of packets sent to one RLC buffer is larger than its transmission capacity, it will lead to packet accumulation in the RLC buffer that may cause bufferbloat problem \cite{33}. And the amount of packets in the other buffer is insufficient that leads to the actual transmission capacity is far lower than the transmission capacity.

In order to observe whether the transmission strategy matches the transmission capacity of the different link. We let $\Delta H(t^{\prime})=\vert Q_P^{\star}(t^{\prime})\vert-\sum_{s=1}^{N_{SCC}}\vert Q_s^{\star}(t^{\prime})\vert$. And we consider two cases one is that there are excessive data packets sent to the PCC, and the other is that there are excessive data packets sent to the SCCs.

\begin{case}
\label{case1}
There is an excessive number of data packets sent to the PCC. 
According to equation \eqref{eqn:M_P}, $S_{P}^M(t)=\lfloor \rho_{P}/\rho_{s} \rfloor$. For SCCs, the arriving data packets $A_s(t)$ is less than its transmission capacity $S_{s}^M(t)$ for $N$ time slots, the packets in RLC buffer will not accumulate. Thus, the size of SCC is as follows:
\begin{eqnarray} 
 \vert Q_s^{\star}(t^{\prime})\vert&\approx&0. \label{eqn:Q_s}
\end{eqnarray}
From equation \eqref{eqn:f_func} we can get the data packet received by the PDCP layer for $N$ slots is as follows:
\begin{eqnarray} 
\sum_{t=t^{\prime}}^{t^{\prime}+N}
\mathbb{E}\left[  \vert Q_U^{P}(t)\vert\right]=\sum_{t=t^{\prime}}^{t^{\prime}+N}\mathbb{E}\big[\vert S_{P}^M(t)\vert\nonumber\\+\sum_{s=1}^{N_{SCC}}\vert S_s^M(t)\vert\big].
\end{eqnarray}
From equation \eqref{eqn:Q_e},we can get the folowing equation for PCC:
\begin{eqnarray}
&&\sum_{t=t^{\prime}}^{t^{\prime}+N}\mathbb{E}\left[\vert S_{P}^M(t)\vert\right]=-\vert Q_P^{\star}(t^{\prime})\vert\nonumber\\ 
&&\quad+\vert Q_{P}^{R}(t^{\prime})\vert+\sum_{t=t^{\prime}}^{t^{\prime}+N}\mathbb{E}\left[ A_{P}(t)\right].
\end{eqnarray}
And from equation \eqref{eqn:Q_e} and  \eqref{eqn:Q_s}, we can get the folowing equation for $N_{SCC}$ SCCs:
\begin{eqnarray} 
&&\sum_{t=t^{\prime}}^{t^{\prime}+N}\sum_{s=1}^{N_{SCC}}\vert S_s^M(t)\vert=-\sum_{s=1}^{N_{SCC}}\vert Q_s^{\star}(t^{\prime})\vert\nonumber\\
&&+ \sum_{s=1}^{N_{SCC}}\vert Q_{s}^{R}(t^{\prime})+\sum_{t=t^{\prime}}^{t^{\prime}+N}\mathbb{E}\left[\sum_{s=1}^{N_{SCC}}A_{s}(t)\right]\notag\\
&&=\sum_{s=1}^{N_{SCC}}\vert Q_{s}^{R}(t^{\prime})+\sum_{t=t^{\prime}}^{t^{\prime}+N}\mathbb{E}\left[\sum_{s=1}^{N_{SCC}}A_{s}(t)\right].
\end{eqnarray}
Thus, the throughput of PDCP layer for $N$ time slots can be rewritten as follow:
\begin{eqnarray} 
&&\sum_{t=t^{\prime}}^{t^{\prime}+N}\mathbb{E}\left[  \vert Q_U^{P}(t)\vert\right]=-\vert Q_P^{\star}(t^{\prime})\vert+\vert Q_{P}^{R}(t^{\prime})\vert+\notag\\
&&\sum_{s=1}^{N_{SCC}}\vert Q_{s}^{R}(t^{\prime})\vert +\sum_{t=t^{\prime}}^{t^{\prime}+N}\mathbb{E}\left[A_{P}(t)+\sum_{s=1}^{N_{SCC}}A_{s}(t)\right].\notag\\
\end{eqnarray}
For $N$ time slots, $L$ packets are sent and $\vert Q_P^{\star}(t^{\prime})\vert=\Delta H(t^{\prime})$. Therefore, the expression of $\sum_{t=t^{\prime}}^{t^{\prime}+N}\mathbb{E}\left[  \vert Q_U^{P}(t)\vert\right]$  can finally be written as
\begin{eqnarray} 
\sum_{t=t^{\prime}}^{t^{\prime}+N}\mathbb{E}\left[ \vert Q_U^{P}(t)\vert\right]=-\Delta H(t^{\prime})+L.
\end{eqnarray}
\end{case}
\begin{case} In the case of excessive data packets sent to SCCs. Similarly, $S_{s}^M(t)=1$. For PCC, the arriving data packet $A_P(t)$ is less than its transmission capacity $S_{P}^M(t)$  for $N$ slots. Thus there are the following expressions
\begin{eqnarray} 
Q_P^{\star}(t^{\prime})& \approx&0.\label{eqn:Q_p}
\end{eqnarray}
Similar to the derivation process of Case \ref{case1}, from equation \eqref{eqn:Q_e} and  \eqref{eqn:Q_p} we can get the throughput of the PDCP layer for N time slots:
\begin{eqnarray} 
&&\sum_{t=t^{\prime}}^{t^{\prime}+N} \mathbb{E}\left[  \vert Q_U^{P}(t)\vert\right]=-\sum_{s=1}^{N_{SCC}}\vert Q_s^{\star}(t^{\prime})\vert+\vert Q_{P}^{R}(t^{\prime})\vert+\notag\\&&
\sum_{s=1}^{N_{SCC}}\vert Q_s^{R}(t^{\prime})\vert
+\sum_{t=t^{\prime}}^{t^{\prime}+N}\mathbb{E}\left[ A_{P}(t)+\sum_{s=1}^{N_{SCC}}A_{s}(t)\right]\notag\\
&& =\Delta H(t^{\prime})+L.
\end{eqnarray}
\end{case}
We take the absolute value of $\vert \Delta H(t^{\prime})\vert$. Thus, the sum of  $  \vert Q_U^{P}(t)\vert$ for $N$ time slots is negatively correlated with $\vert\Delta H(t^{\prime})\vert $. Problem \ref{prob:equ} can be equivalent to the following form.
\begin{eqnarray}
\underset{A_{P/s}(t^{\prime})}{\textrm{minimize}}  && \vert\Delta H(t^{\prime})\vert. 
\end{eqnarray}
For the  expression of $\Delta H(t^{\prime})$, there are the following
expressions,
\begin{eqnarray}
&&\Delta H(t^{\prime})=Q_P^{\star}(t^{\prime})-\sum_{s=1}^{N_{SCC}}Q_s^{\star}(t^{\prime})\notag\\
&&=\vert Q_P^{R}(t^{\prime})\vert-\sum_{s=1}^{N_{SCC}}\vert Q_s^{R}(t^{\prime})\vert\notag\\
&&+\sum_{t=t^{\prime}}^{t^{\prime}+N}\mathbb{E}\Bigg[\vert Q_{P}^R(t+1)\vert
-\sum_{s=1}^{N_{SCC}}\vert Q_{s}^R(t+1)\vert\notag\\
&&-(\vert Q_{P}^R(t)\vert-\sum_{s=1}^{N_{SCC}}\vert Q_{s}^R(t)\vert)\Bigg]\notag\\
&&=B(t^{\prime})+\sum_{t=t^{\prime}}^{t^{\prime}+N}\mathbb{E}\left[ B(t+1)-B(t)\right].\label{eqn:Q_B}
\end{eqnarray}
Assuming that the environmental parameters and the transmission strategy  after time slot $t^{\prime}$ remains constant, the formula for equation \eqref{eqn:Q_B} can be expanded as follows
\begin{eqnarray}
&&\Delta H(t^{\prime})=B(t^{\prime})+\sum_{t=t^{\prime}}^{t^{\prime}+N}\mathbb{E}\left[ B(t^{\prime}+1)-B(t^{\prime})\right]\notag\\
&&=B(t^{\prime})+ B(t^{\prime}+1)-B(t^{\prime})+B(t^{\prime}+2)\notag\\
&&-B(t^{\prime}+1)+...+B(t^{\prime}+N+1)-B(t^{\prime}+N)\notag \\
&&=B(t^{\prime})+(N+1)\cdot\bigg[A_{P}(t^{\prime})-\vert S_{P}^M(t)\vert-\notag \\
&&\quad\sum_{s=1}^{N_{SCC}}\left(A_{s}(t^{\prime})-\vert S_{s}^M(t)\vert\right)\bigg]\notag\\
&&=B(t^{\prime})+(N+1)\cdot\bigg[A_{P}(t^{\prime})-\sum_{s=1}^{N_{SCC}}A_{s}(t^{\prime})-\notag\\
&&\quad(\lfloor \rho_{P}/\rho_{s} \rfloor-N_{SCC})\bigg].
\end{eqnarray}
The optimization problem is then transformed into the following form,
\begin{eqnarray}
\underset{A_{P/s}(t^{\prime})}{\textrm{minimize}}  && \big\vert B(t^{\prime})+(N+1)\cdot\big[A_{P}(t^{\prime})-\notag\\&&\sum_{s=1}^{N_{SCC}}A_{s}(t^{\prime})
-(\lfloor \rho_{P}/\rho_{s} \rfloor-N_{SCC})\big]\big\vert\notag.\\ 
\end{eqnarray}

\end{appendices}

\bibliographystyle{IEEEtran}
\bibliography{references}
% \newpage
% \biographies
\end{document}